\documentclass{article}
\usepackage{makeidx}  
\hyphenation{op-tical net-works semi-conduc-tor}
\usepackage[margin=1in]{geometry}
\usepackage{graphicx}
\graphicspath{./fig/}
\usepackage{amsmath}
\interdisplaylinepenalty=2500
\usepackage{hyperref}
\usepackage{tikz}
\usepackage{algorithm,amssymb}
\usepackage{algpseudocode} 
\usepackage[english]{babel}
\usepackage{mathtools}
\usepackage{multirow}
\usepackage{authblk}
\usepackage{amsthm}
\usetikzlibrary{calc}
\pagestyle{headings}  

\newcounter{cumulArea}
\setcounter{cumulArea}{0}


\newtheorem{theorem}{Theorem}

\newcommand{\Expect}{{\mathbb{E}}}
\newcommand{\Region}{{\mathcal{R}}}
\newcommand{\B}{{\mathcal{B}}}
\newcommand{\G}{{\mathcal{G}}}
\newcommand{\E}{{\mathcal{E}}}

\newcommand{\floor}[1]{\lfloor #1 \rfloor}

\begin{document}
%

\title{Lookup Lateration: Mapping of Received Signal Strength to Position for Geo-Localization in Outdoor Urban Areas}



\author[1]{Andrey Shestakov}
\author[2]{Danila Doroshin}
\author[2]{Dmitri Shmelkin}
\author[1]{Attila Kert\'esz-Farkas\thanks{Correspondence to akerteszfarkas@hse.ru}}
\affil[1]{Department of Data Analysis and Artificial Intelligence, Faculty of Computer Science, National Research University Higher School of Economics (HSE), Moscow, Russia}
\affil[2]{Huawei Technologies Co. Ltd, Moscow, Russia}

%

\maketitle              

\begin{abstract}
The accurate geo-localization of mobile devices based upon received signal strength (RSS) in an urban area is hindered by obstacles in the signal propagation path. Current localization methods have their own advantages and drawbacks. Triangular lateration (TL) is fast and scalable but employs a monotone RSS-to-distance transformation that unfortunately assumes mobile devices are on the line of sight. Radio frequency fingerprinting (RFP) methods employ a reference database, which ensures accurate localization but unfortunately hinders scalability. 

Here, we propose a new, simple, and robust method called lookup lateration (LL), which incorporates the advantages of TL and RFP without their drawbacks. Like RFP, LL employs a dataset of reference locations but stores them in separate lookup tables with respect to RSS and antenna towers. A query observation is localized by identifying common locations in only associating lookup tables. Due to this decentralization, LL is two orders of magnitude faster than RFP, making it particularly scalable for large cities. Moreover, we show that analytically and experimentally, LL achieves higher localization accuracy than RFP as well. For instance, using grid size 20 m, LL achieves 9.11 m and 55.66 m, while RFP achieves 72.50 m and 242.19 m localization errors at 67\% and 95\%, respectively, on the Urban Hannover Scenario dataset. 
\end{abstract}

\section{Introduction}
The localization of wireless mobile devices \cite{Zekavat2011,Gezici2008} has become a key issue in emergency cases \cite{FCC1996,EC2002}, surveillance, security, family tracking, etc. GPS-based localization, while very accurate, is usually not preferred because of its high energy consumption; therefore, a  wide range of technologies has emerged for localization based on received signal strength (RSS), time of arrival (ToA), and angle of arrival (AoA). The two latter methods, ToA and AoA, require additional equipments, such as an antenna array to time synchronization between the transmitter and the receiver, while AoA requires some array to identify the signal's angle. However, RSS measurements are ubiquitous and readily available in almost all wireless communication systems; hence, it seems plausible to gain information about mobile devices' position.

Our main problem is characterized as follows: We assume that we are given a set of antennas $A=\{a_i=(a_{i_x},a_{i_y}) \in R^2\}$ and several RSS measurements for a single mobile phone: 
\begin{equation}
M_{RSS}=[r_{a_1}, r_{a_2}, ...., r_{a_{|A|}}],
\end{equation}
\noindent where $r_{a_i}$ denotes RSS from antenna $a_i$, measured in dBm  if it is observed; otherwise, it is defined as $NaN$. Antenna $a_i$ and its location will be used interchangeably in this article. The main task is to determine the mobile phone's location, where its RSS measurements were observed. A straightforward method to carry out a simple localization is the triangular lateration (TL). This method calculates the mobile phone's location by solving the following optimization problem: 
\begin{equation}
\begin{aligned}
\underset{(x,y)}{\text{min}}\;\quad&\sum_{\mathclap{i:r_{a_i} \in M_{RSS}} }\left(  (a_{i_x}-x)^2 + (a_{i_y}-y)^2 - (d(r_{a_i}))^2 \right)^2,
\label{eq:lateration}
\end{aligned}
\end{equation}
where $d(.)$ denotes a distance function. When the distance function $d(.)$ is monotone and at least three measurements are available, the optimization problem becomes convex and avoids the local minima phenomenon. Although this method is widely used because of its simplicity, there are two main issues with it. First, it requires a well-calibrated signal-to-distance conversion function $d(.)$. For urban areas, the COST-HATA models were developed by the COST European Union Forum based on various field experiments and research \cite{neskovic2000modern}.
Second, it assumes mobile devices on line of sight (LoS). This requirement is hardly met in practice in urban areas because signal propagation is hindered by obstacles, concrete constructions, buildings, churches, tunnels, underpasses, etc. In this case, the observed distance can be decomposed as
\begin{equation}
d(r_{a_i})=L_i+e_i+NLOS_i,
\end{equation}
where $L_i$ is the true distance and $e_i$ is the receiver noise, which is assumed to be a zero mean Gaussian random variable. The quantity $NLOS_i$ denotes the error caused by obstacles, buildings, and constructions on the signal propagation path. It is worth mentioning that according to field test results, $NLOS_i \gg e_i$. These obstacles interfere with the signal by either weakening it or changing its path. The latter phenomenon is called multipath propagation. In either case, the antenna receives weaker signals, which indicates that mobile phones are located farther away than they are. This is illustrated in Figure \ref{fig:3Lateration}A. The NLOS error can be modeled by a Gaussian or exponentially distributed error  \cite{cong2005}; however, it can be considered a deterministic error that depends only on obstacles located between antenna and mobile phones. An article \cite{Ma2007} proposed an error mitigation method based on the distribution of the circle of positions' intersections. LMedS algorithm \cite{Marco2008} is based on the observation that the intersection generated by an obstacle on the sight is far from the other intersections. Since the method in eq. \ref{eq:lateration} is sensitive to outliers because of the squared error measure $(l_2)$, LMedS uses an outlier insensitive optimization to solve that minimization.  These methods show promising results when few obstacles can be found on the landscape, though how these methods will work in a dense urban area remains to be seen. 

\begin{figure}[tbp]
	\centering
	\small
	\begin{tabular}{cc}
	\includegraphics[trim=3cm 2cm 3cm 2cm,clip=true,width=70mm]{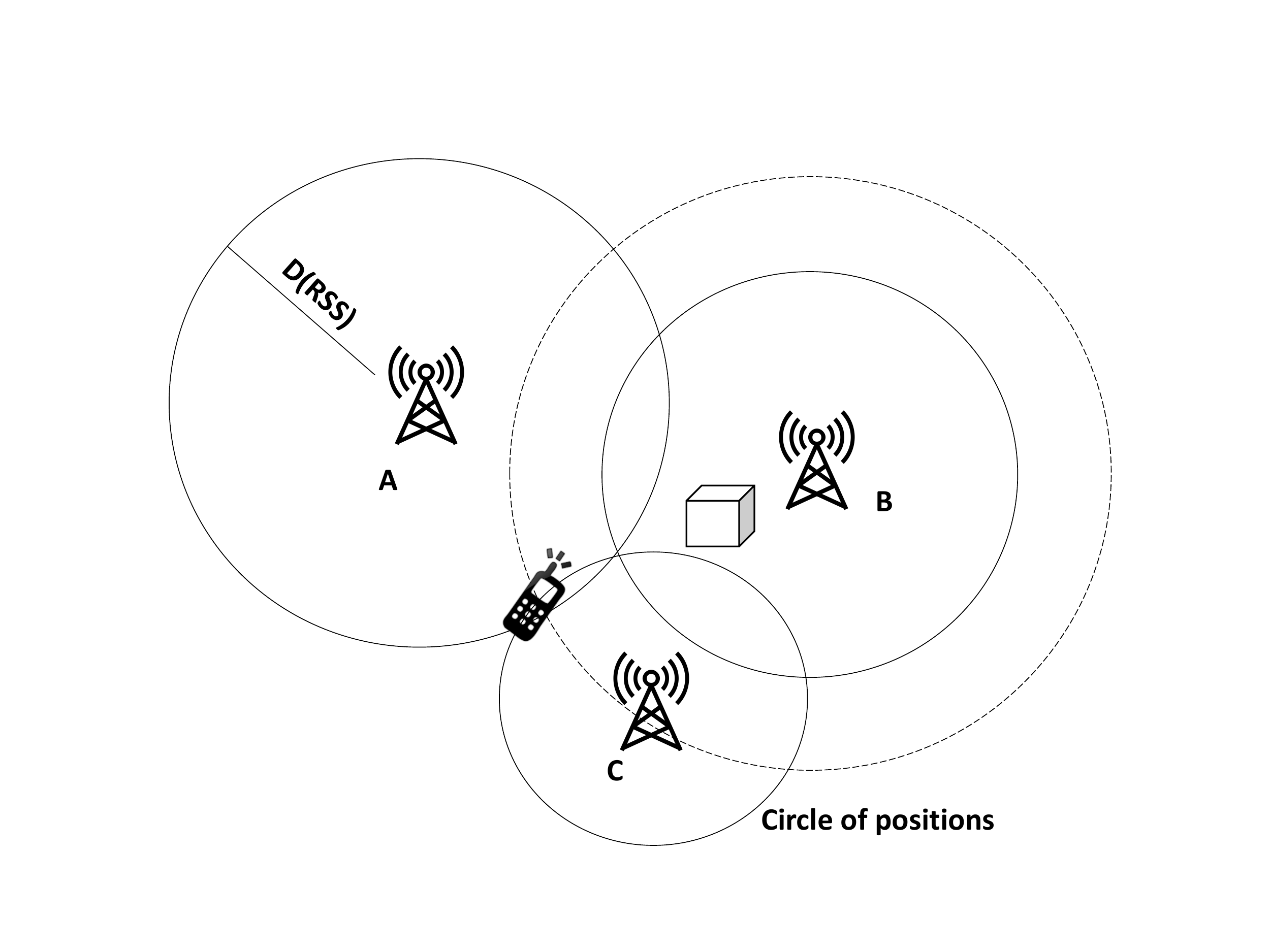}&	\includegraphics[width=70mm]{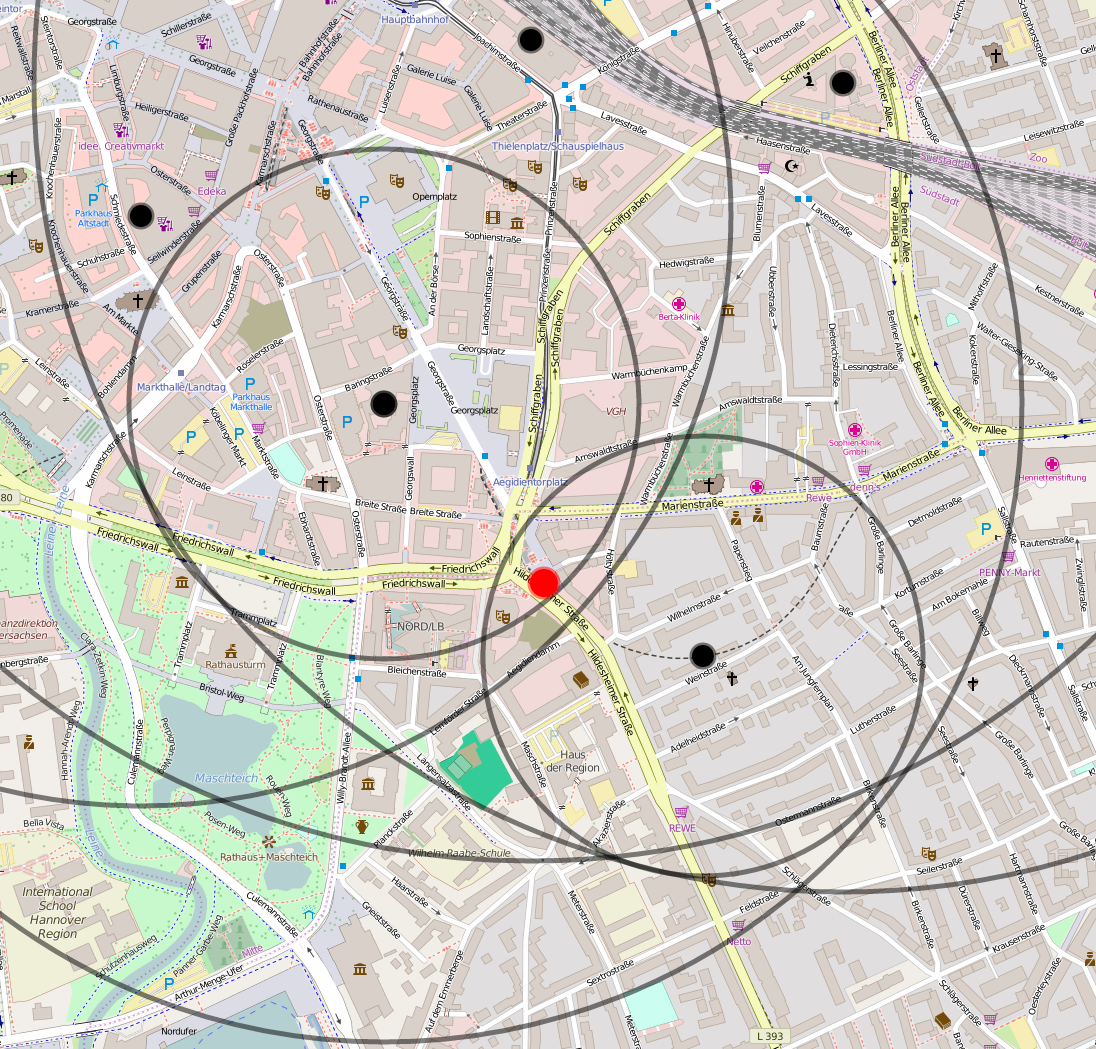}\\
	(A) &	(B) \\
	\end{tabular}	
	\caption{Lateration. A) Triangular lateration hindered by obstacles. D(RSS) denotes the distance determined by RSS. The dashed circle denotes the true distance of a mobile phone from tower B, which appears farther because of the obstacle (represented by a box) between them. B) Lateration in Hannover dataset using the six strongest RSSs. The red dot denotes the mobile device's true location, the black dots denote antennas, and the circles denote the circle of positions obtained with the COST-HATA model based on actual RSSs. Data (User ID=10048 and Time=0.1) was taken from article \cite{Rose2013}.}
	\label{fig:3Lateration}
\end{figure}

To characterize the NLoS error, we analyzed the Urban Hannover Scenario dataset \cite{Rose2013}, which contains approximately 22 Gb of localized RSS measurements from Hannover. For further details, we refer the reader to Results section or article \cite{Rose2013}.

To investigate the relationship between observed RSSs and true distances, we first selected tower ID=183 and identified mobile phones located at the fourth degree from the tower's azimuth. The data are shown in Figure \ref{fig:RSS-vs-dist}A, where the blue triangle denotes the tower's location and where the blue dots indicate the mobile phones' locations. The selection of the tower and the degree was arbitrary. Then we plotted the measured RSSs against the true distance shown in Figure \ref{fig:RSS-vs-dist}B. The relationship indicates an exponential-like correlation. However, the weaker the signal is, the noisier the correlation becomes. Signals stronger than -60 dBm could be converted to distance unambiguously. But as signal strength decreases, the conversion becomes more ambiguous. For instance, an observed -90 dBm signal could indicate a mobile phone located somewhere between 2.4 km and 4.5 km from the given tower. It is also worth noticing that the phones are not distributed evenly in this range; rather, they are concentrated at certain distances. For instance, no phone is located between 2.5 km and 3.0 km in this direction. This gap can be seen at the railways shown in Figure \ref{fig:RSS-vs-dist}A.

\begin{figure}[btp]
	\centering
	\small
	\begin{tabular}{cc}
		\includegraphics[width=70mm]{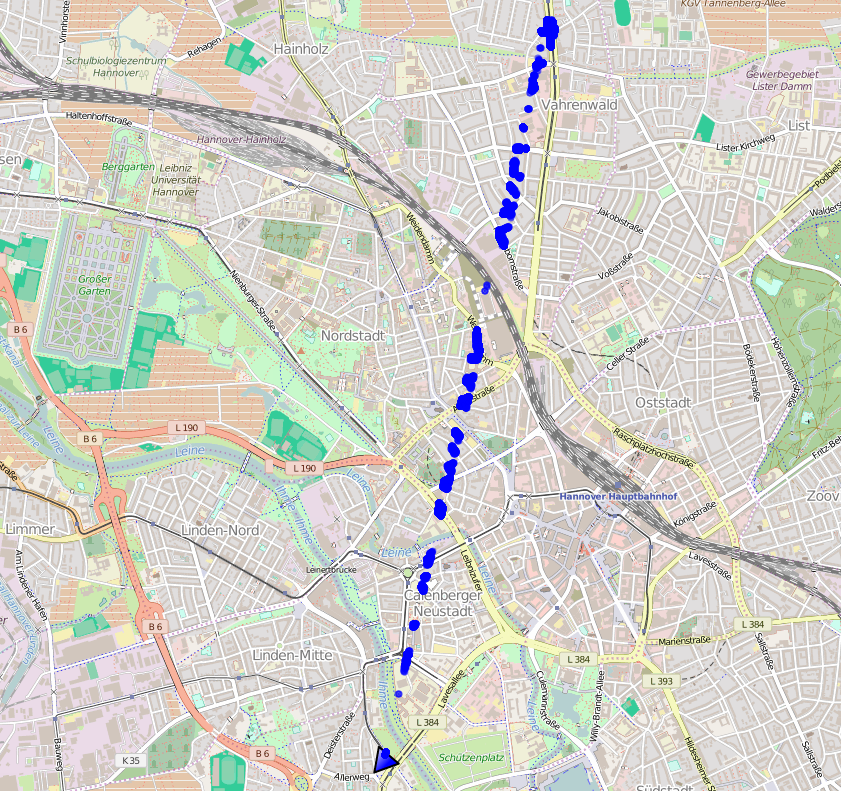}&
		\includegraphics[trim=1.0cm 6cm 2cm 7cm,clip=true,width=70mm, height=50mm]{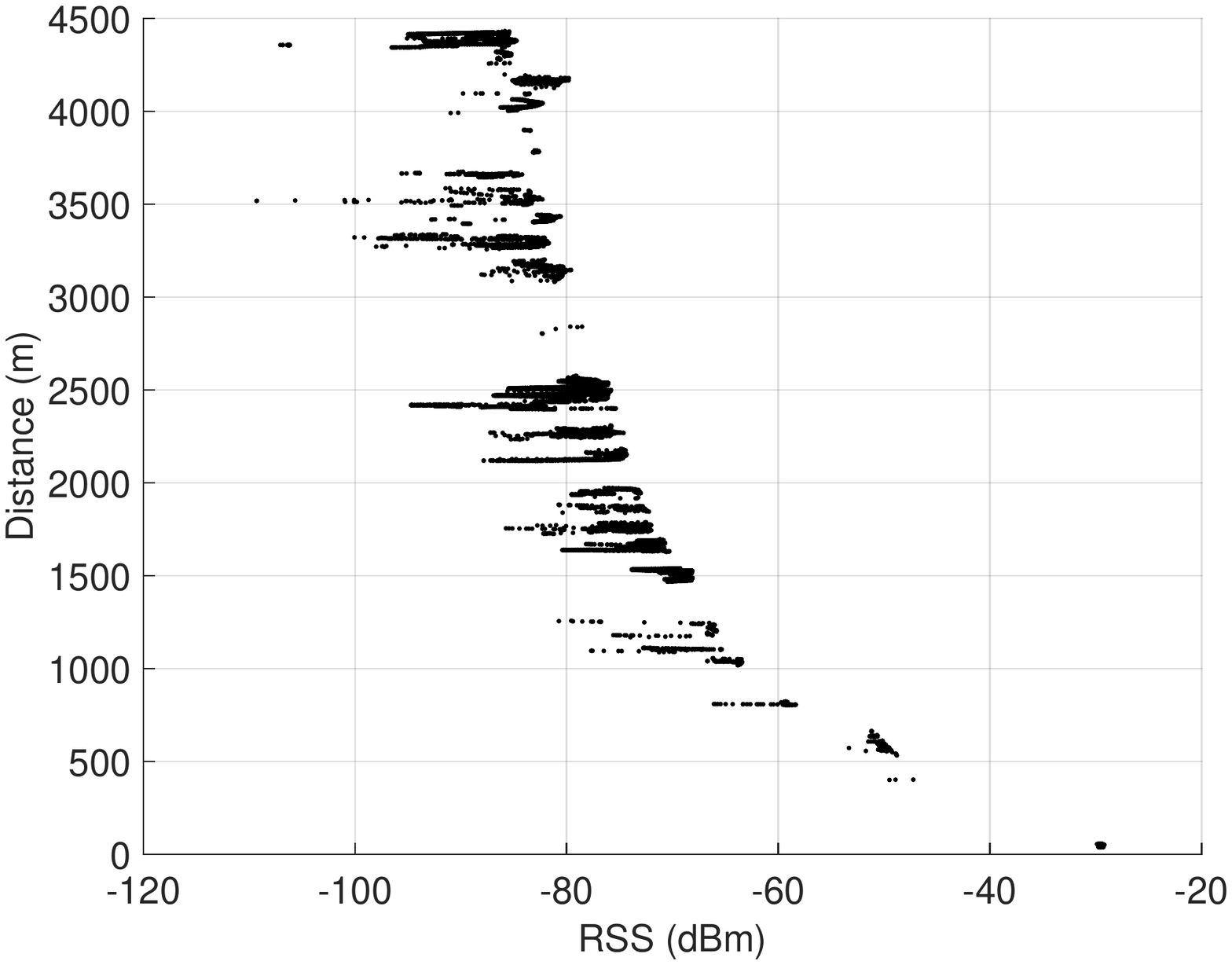}\\
		(A)&
	    (B) \\
	\end{tabular}	
	\caption{RSS vs. distance. A) Location of mobiles phones (marked blue dots) from antenna ID=183 (marked blue triangle at the bottom) at 4th degree to its azimuth (clockwise). B) Correlation between the distance measured from antenna and the RSS. }
	\label{fig:RSS-vs-dist}
\end{figure}

Next, we investigated the circle of positions corresponding to a given RSS at a tower. On a map, we placed the locations of the mobile phones that receive -79 dBm from the tower ID=183 from any direction. This map is shown in Figure \ref{fig:RSS-dist-loss}A. We can observe that the locations lay over on a belt 2 km wide rather than on a circular line. To show that this is not a measurement noise but a result of the presence of obstacles,
we plotted mobile phones receiving -78, -79, and -80 dBm, respectively, in Figure \ref{fig:RSS-dist-loss}B-C. On this map, mobile phones receiving signals of the same RSS group together and do not mix randomly. Hence, we conclude that the signal noise is small and that the error is driven by obstacles.

The radio frequency fingerprinting (RFP) \cite{Bahl2000,Gentile2012} approach is designed to overcome the NLoS problem. This is a two-phase algorithm. The first step, called offline training phase or surveying, is to construct a reference dataset that consists of a collection of localized RSS measurement vectors $M_{XY}=[r_{a_1}, r_{a_2}, \dots, r_{a_{|A|}}]$ all over the area, where $x,y$ denotes its true position. In the second phase, localization phase, the query device's position producing $M_{RSS}$ is estimated based on the nearest dataset member via a k-Nearest Neighbor ($k$NN) approach. In other words, the location $x,y$ is determined by solving the following search problem:

\begin{equation}
\begin{aligned}
&\tilde{x},\tilde{y}=  &\underset{x,y}{\text{argmin}}&\; d(M_{RSS}, M_{XY}), \\
\end{aligned}
\end{equation}
where the distance function $d(.,.)$ can be, for instance, a Euclidean distance, a cosine distance \cite{He2016}, or Jaccard coefficients \cite{Jiang2012}, which simply omit $NaN$ values. 
One drawbacks of RFP is the lack of scalability, which means that it can be considerably slow over large areas.
\begin{figure}[btp]
	\centering
	\small
	\begin{tabular}{ccc}
		\includegraphics[width=40mm, height=50mm]{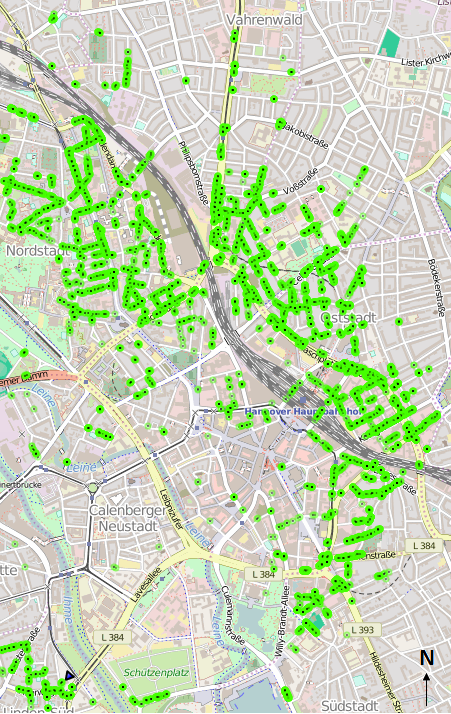} &
		\includegraphics[width=40mm, height=50mm]{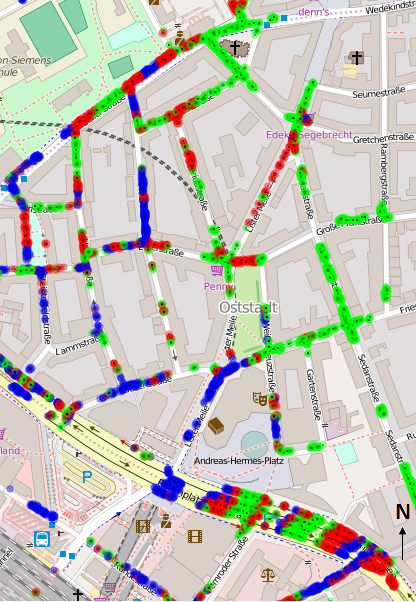} &
		\includegraphics[width=40mm, height=50mm]{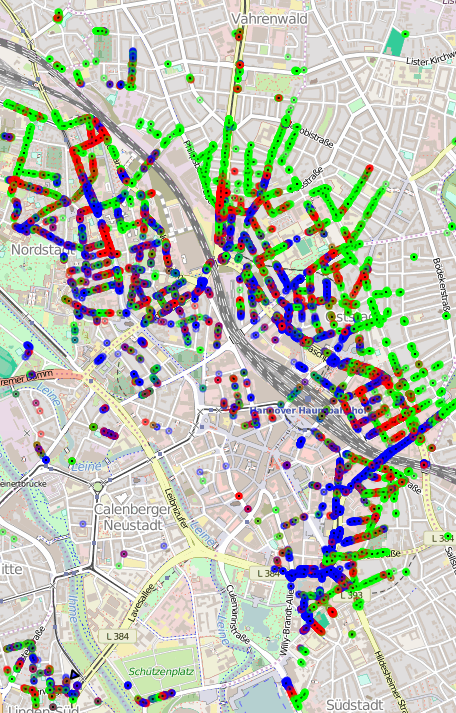}\\
		(A) & (B) &(C) \\
	\end{tabular}	
	\caption{Mobile phone locations from antenna ID=183. A) Mobile phones receiving -79 dBm signals. B) Mobile phones receiving -78 dBm (blue), -79 dBm (red), and -80 dBm (green) RSSs nearby Hauptbahnhof (central train station). C) Same as B) over downtown.}
	\label{fig:RSS-dist-loss}
\end{figure}
Grid systems are often utilized to mitigate data redundancy, making localization faster with less memory consumption. 
Measurements within a given grid can be replaced with their mean vectors in the reference dataset, and mobile phones can then be localized in the center of the closest grid. Therefore, RFP can be formulated as follows: localize query $M_{RSS}$  in the center of grid $\tilde{xy}$ calculated by $\tilde{xy} = \text{argmin}_{xy\in \G}\; d(M_{RSS}, \mu_{xy})$, where $\mu_{xy}$ is the sample mean observed in grid $xy$, $\G$ denotes the set of grids and $d(.,.)$ is a suitable distance function. This reduces search time and the size of the reference datasets by a couple of magnitudes at the expense of localization accuracy, and the trade-off can be controlled with the grid's size. To make localization even faster, Campos et al. \cite{CamposL09} have developed filtering procedures to reduce the search space.   

Besides RFP, which is based on a $k$-NN, several classic machine learning methods have been evaluated and tested on localization problems. For instance, Wu et al. \cite{wu2004wlan} have tried support vector machines, Barrau et al. \cite{nuno2006new} have used linear discriminant analysis, and Campus et al. and Magro et al. \cite{campos2010mobile,magro2007genetic} have used genetic algorithms for positioning measurements. Artificial neural networks are also popular; they can learn a regression function to map a $M_{RSS}$ to its corresponding location $\tilde{x},\tilde{y}$ \cite{Yu2011,Takenga2006,Takenga2006a}. No matter which methods are used, they all need to learn highly nonlinear mapping like in Figure \ref{fig:RSS-vs-dist}B to provide a bona fide approximation. In our opinion, this seems quite challenging in practice.

In this paper, we present a new method called lookup lateration (LL) for the rapid and accurate geo-localization of mobile phones based on RSS. The underlying idea is based on the decentralization of the reference dataset. For every antenna tower $a_i\in A$, LL builds a lookup table to store associating reference locations, that is, $\tau_{a_i}(r_{a_i})=\{(XY)\mid  r_{a_i}\in M_{XY}\}$. Therefore, the localization of a given query $M_{RSS}$ can be carried out by simply determining the common locations in the associating lookup tables, which can be formally stated as follows:
\begin{equation}
\tilde{x},\tilde{y}= \bigcap_{r_{a_i} \in M_{RSS} }\tau_{a_i}(r_{a_i}).
\end{equation}
One of the advantages of our method compared to RFP is that LL does not need to search the whole reference dataset, only two-six tables associated to query measurements. This results in a great acceleration; LL is around 100 times faster than RFP on the Urban Hannover Scenario dataset, and we believe LL can be even faster in very big cities. 

LL resembles TL to some extent. Every lookup table can be considered an RSS-to-distance nonlinear mapping $d(.,.)$ (more precisely, a relation) w.r.t. a given antenna tower. However, instead of solving a nonconvex optimization problem, LL carries out the localization by determining common elements in the lookup tables. Therefore, LL does not involve local minima problems, but it may result in multiple locations, which need to be addressed.   

A case where different RSS measurements from the same reference grid are observed from the same tower is worth commenting on. In this case, RFP would store only the average of these measurements, which would result in information loss. However, LL would store the reference grids in multiple lookup tables with respect to the measured RSS. Thus, LL preserves all the information, and this offers a great advantage in terms of localization accuracy. 

Our contribution is summarized as follows. In the second section, we formally introduce the lookup lateration (LL) algorithm. In the third section, we give a formal comparison on the error obtained with LL and RFP, and we point out a conceptual limitation of RFP when it is used with grid systems. In the fourth section, we present and discuss our experimental results. Finally, we summarize our findings in the last chapter. 

\section {Lateration Using RSS-to-Location Lookup Tables}

In the previous section, we concluded that the relationship between received RSS and true distance is nonmonotone in dense urban areas. Using any nonmonotone, continuous function $d(.)$ in eq \ref{eq:lateration} would result in a nonconvex optimization problem. Here, we introduce a new procedure that we termed lookup lateration (LL).

The procedure is provided a collection of $M_{XY}$ measurements annotated with their true locations as training data. Then lookup tables $\tau_a(r)$ are constructed, which contain mobile locations with respect to RSS $r$ measured by antenna $a$. Because RSS measurements $r$ are real valued and hindered by some measurement noise, we simply applied binning techniques to group measurements together. The corresponding bin $b$ of a measurement $r$ using bin size $s$ is calculated $b=\floor{r/s}$. In most of our experiments, we simply used bin $s=1$; thus, all $r$ were rounded down to the closest integer. For instance, -64.7 dBm is rounded down to -64. In the rest of the paper, we assume that for LL, all measurements $r$ are already binned with  $s=1$, unless it is specified otherwise. 

The lookup table $\tau_a(r)$ implicitly encodes distances corresponding to observed RSS measurements. This procedure is shown by Algorithm \ref{alg:creation-lookup-table}. Note that $\tau_a(r)$ could contain explicit distances between the antenna $a$ and the locations of mobile phones; however, as seen in the next step, this is not necessary. Figure \ref{fig:RSS-dist-loss}A shows an example, where data were taken from the Urban Hannover Scenario. Green dots mark the locations stored in $\tau_{183}(-79)$. To avoid redundant locations in lookup tables and reduce their size, nearby locations can be grouped using clustering algorithms, and the center of clusters can be stored in lookup tables.

\begin{algorithm}
	\begin{algorithmic}[1]
		\Procedure{ConstructLookupTables}{M,D}
		\For {all antenna tower $a$}
			\For { all RSS $r$}
				\State $S \leftarrow \{(x_i,y_i)\mid r_{a_i}=r, a_i=a\}$
				\State $\tau_a(r) \leftarrow$ Clustering($S$, D)
				\Comment Group mobiles nearby.
			\EndFor
		\EndFor
		\State \Return $\tau$
		\EndProcedure
	\end{algorithmic}
	\caption{{\bf Construction of lookup tables.} The input is a list of localized RSS measurement $M=\{(x_i,y_i, r_{a_i})\}_{i=1}^N$. Triplet $(x_i,y_i, r_{a_i})$ denotes location $x_i,y_i$, where RSS $r_{a_i}$ is measured from tower $a_i$. $D$ is the maximal allowed diameter of a cluster.
		\label{alg:creation-lookup-table}}
\end{algorithm}

Now the next step is to determine the location of a given measurement $M_{RSS}=[r_{a_1}, r_{a_2}, ...., r_{a_{|A|}}]$. Here, the principle is that it can be carried out by determining the common locations in the corresponding tables $\tau_{a_i}(r_{a_i})$, that is, $M_{RSS}$ is annotated by the location $\bigcap_{i}\tau_{a_i}(r_{a_i})$ for $r_{a_i}\ne NaN$ and $\tau_{a_i}(r_{a_i})\ne\emptyset$. In practice, this could lead easily to an empty set or unambiguous locations.  Our method, shown in Algorithm \ref{alg:lateration-lookup-table}, is slightly different as, in a greedy manner, it takes into account that stronger signals provide more reliable information. Let $M=\{r_{a_{i_1}}, r_{a_{i_2}}, ...., r_{a_{i_n}}\}$ be a list of the observed RSS  from $M_{RSS}$ in decreasing order. Our algorithm starts with the set of candidate locations $C^1=\tau_{a_1}(r_{a_1})$ provided by the strongest signal. In subsequent iterations, in the while loop at line \ref{alg:term}, candidate location $c_i\in C^1$ is eliminated from $C^1$ if it does not appear as a candidate location from another antenna $a_k$ within a tolerance $T$. This iteration terminates if either all measurements are processed or the  candidate locations in $C^k$ have a smaller variance than a predefined threshold. The iteration also terminates when $C^k$ is emptied. Figure \ref{fig:lookup_lateration} shows an example of how this algorithm works.

\begin{algorithm}
	\begin{algorithmic}[1]
		\Procedure{LookupLateration}{$M_{RSS}$, $\tau$, $T$}
		\State Remove $ r_{a_i}\in M_{RSS}$ from $M_{RSS}$ {\bf if}  $\tau_{a_i}(r_{a_i})$ is empty
		\State \Return some default location {\bf if} $M=\emptyset$ \Comment e.g., location of antenna tower $a_1$.
		\State $k = 1$
		\State $C^k \leftarrow \tau_{a_k}(r_{a_k})$ 
		\While {$C^k$ is unambiguous and $k\le |M_{RSS}|$} \label{alg:term}
			\State $k\leftarrow k+1$
			\State $C^{k} \leftarrow \{c_i | c_i\! \in\! C^{k-1},c_j\!\in\! \tau_{a_k}(r_{a_k}),d(c_i,c_j)\! \le\! T\}$ \label{alg:dist}
		\EndWhile
		\State $C^k \leftarrow C^{k-1}$ {\bf if} $C^k$ is empty
		\State \Return mean of locations in $C^k$ for query $M_{RSS}$
		\EndProcedure
	\end{algorithmic}
	\caption{{\bf Lookup lateration.} The input $M_{RSS}=\{r_{a_1}, r_{a_2}, ...., r_{a_n}\}$ is a list of RSS measurements in decreasing order. Returns a location estimation for $M_{RSS}$.
		\label{alg:lateration-lookup-table}}
\end{algorithm}

\begin{figure}[hbtp]
	\centering
	\small
	\begin{tabular}{ccc}
		\includegraphics[width=50mm,]{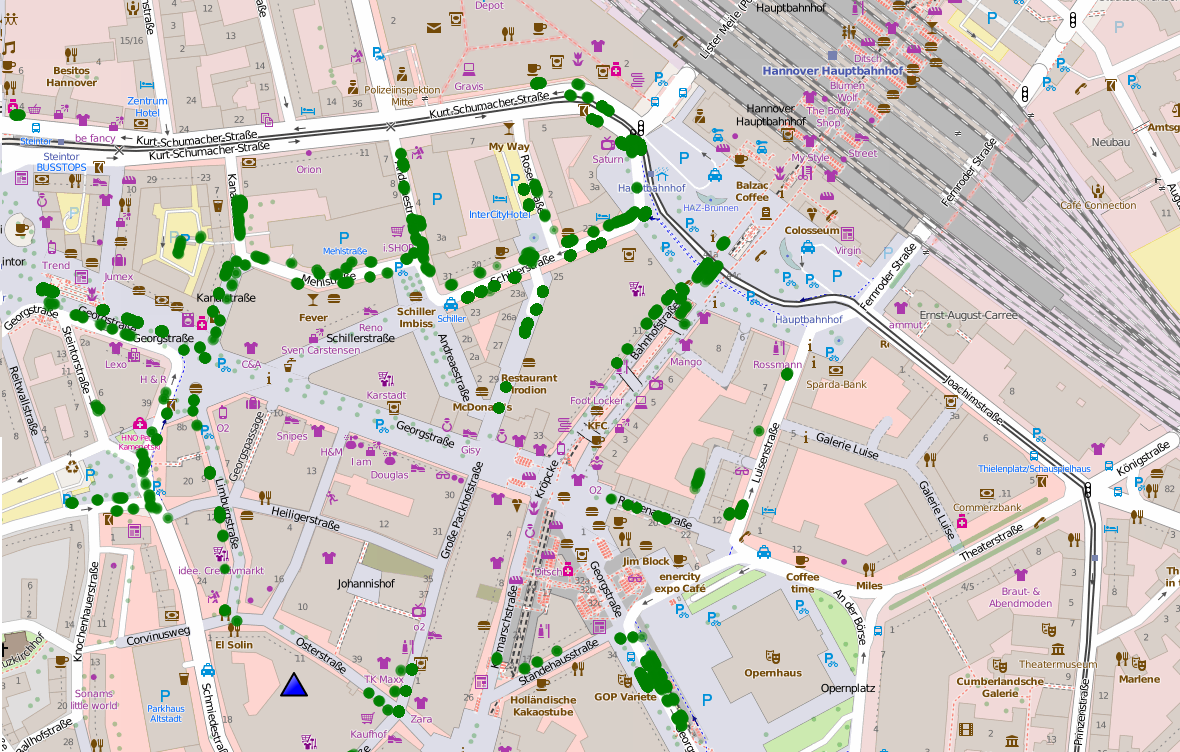} &
		\includegraphics[width=50mm,]{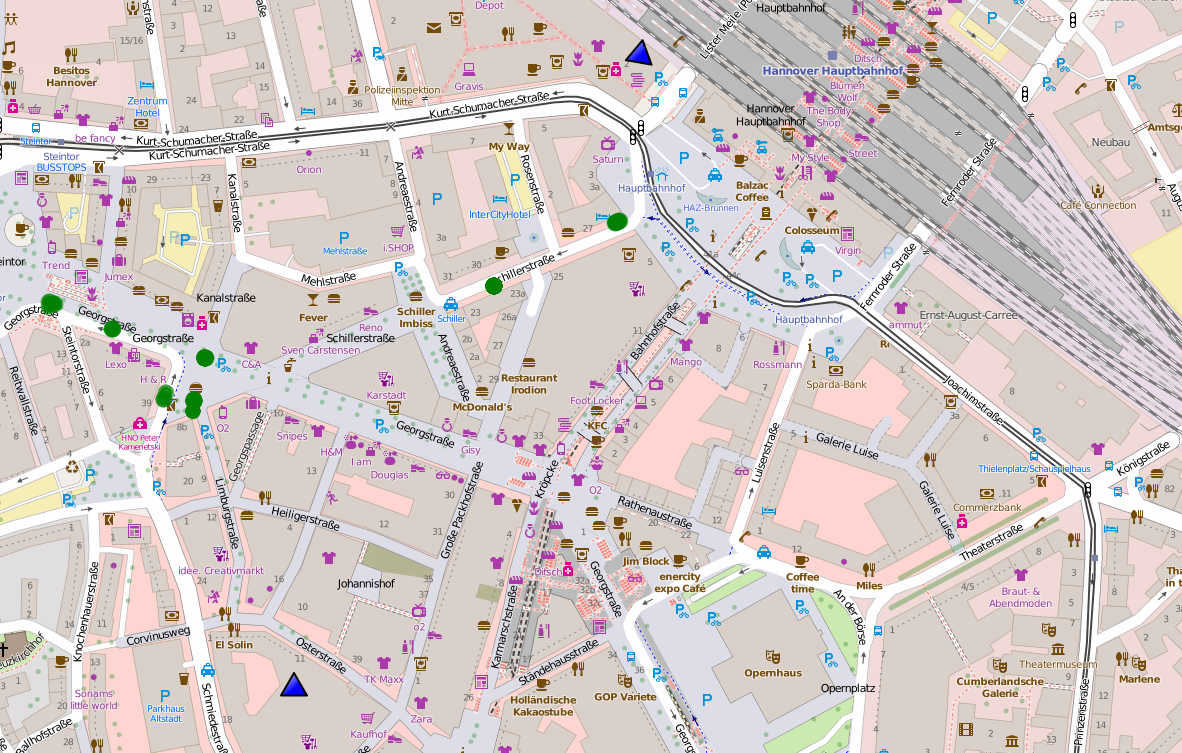} &
		\includegraphics[width=50mm, ]{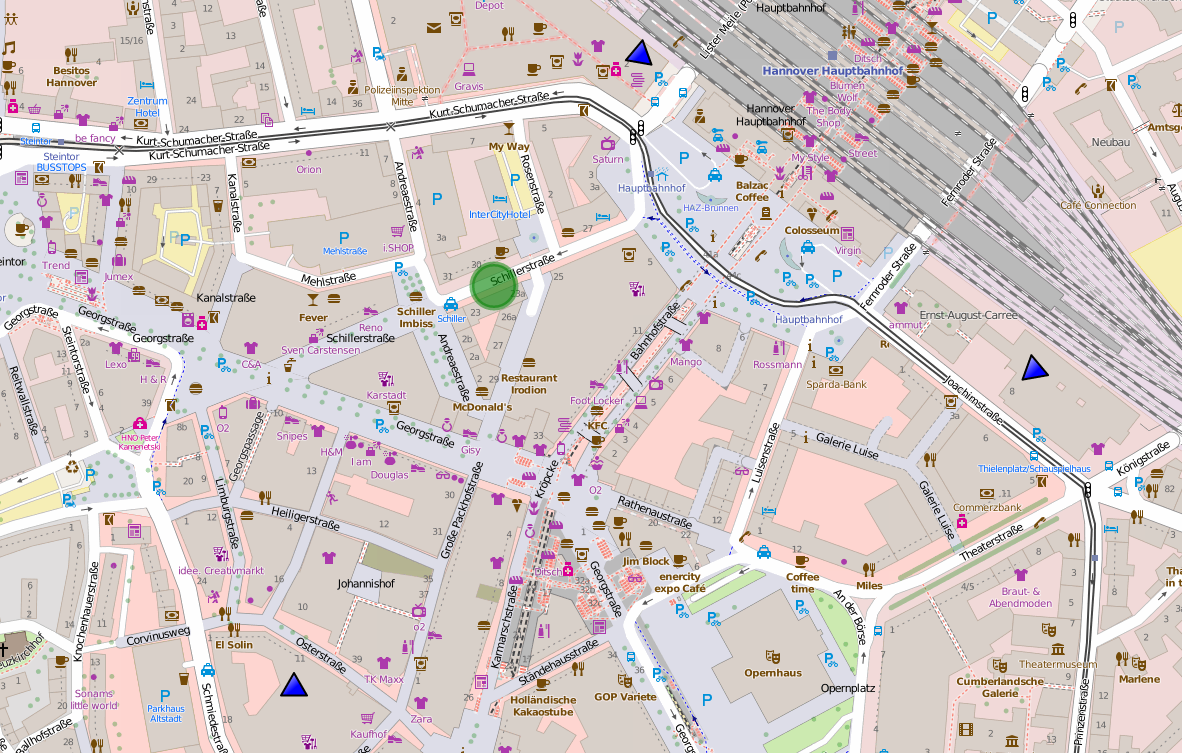} \\
		(A) &
		(B) &
		(C)\\
	\end{tabular}	
	\caption{Illustration of lookup lateration for query $M_{RSS}=\{88\!\!:\!\!-50;\; 27\!\!:\!\!-55;\; 135\!\!:\!\!-55\}$. Candidate locations are marked by green dots, and three antennas (ID=88 [bottom], ID=27 [top] ,ID=135 [right hand side]) are marked by blue triangles. A) Initial locations in  $C^1=\tau_{88}(-50)$ at the beginning of Algorithm \ref{alg:lateration-lookup-table}.  B) Next, locations not present in $\tau_{27}(-55)$ are removed, and the resulting locations in $C^2$ are shown. C) Points not present in $\tau_{135}(-55)$ are also removed, yielding an ambiguous location estimation in $C^3$ for $M_{RSS}$. }
	\label{fig:lookup_lateration}
\end{figure}

\section{Error Estimation in Grid Systems}

The LL method can also be used with grid systems, which can also lead to further simplifications of the algorithm. In our experiments, Algorithm \ref{alg:creation-lookup-table} simply stored unique grid references in $\tau_a(r)$. Thus, clustering algorithms were not needed.  Moreover, filtering condition ($d(c_i,c_j) < T$) in line \ref{alg:dist} of Algorithm \ref{alg:lateration-lookup-table} was replaced with a Kronecker delta ($\delta_{c_i,c_j} \stackrel{?}{=} 1$)\footnote{defined by Kronecker delta ($\delta_{a,b} = 1 \iff a=b$, otherwise $0$)}, which is more quickly evaluated.

Now we can compare the error obtained by fingerprinting and lookup lateration methods under two assumptions: (1) completeness and (2) unambiguousness. By completeness, we assume that there are enough data observed in each grid. This ensures that error is not induced by data sparsity or poor design. By unambiguousness we assume that the grid size is large enough so that identical observations cannot be found in different grids. This assumption ensures that any observation can be determined unambiguously. Now we can claim the following:

\begin{theorem}
Let $\Expect [\E_{LL} ]$ and $\Expect [\E_{RFP}]$ denote the expected error obtained with LL and RF, respectively. Under the conditions mentioned above, we have
$$\Expect[\E_{LL}] \le \Expect [\E_{RFP}].$$
\label{theorem1}
\end{theorem}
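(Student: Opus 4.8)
The plan is to reduce the inequality to a pointwise statement that holds for every individual query, and then take expectations, exploiting the fact that under the two stated assumptions both estimators can only ever return the center of a grid cell. So the whole argument rests on comparing, for a fixed query, the cell center LL returns against the cell center RFP returns.

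First I would fix an arbitrary query $M_{RSS}$ generated at a true location $p$ lying in some cell $G^{\star}\in\G$, and show that LL returns $\operatorname{center}(G^{\star})$ exactly. By completeness, the observation produced in $G^{\star}$ was recorded for each reporting antenna $a_i$, so $G^{\star}\in\tau_{a_i}(r_{a_i})$ for every $i$ with $r_{a_i}\neq NaN$; hence $G^{\star}$ is never discarded by the greedy elimination of Algorithm \ref{alg:lateration-lookup-table}. By unambiguousness, no other cell yields the identical observation across all the relevant tables, so the surviving candidate set collapses to the singleton $\{G^{\star}\}$. Consequently $\E_{LL}=\operatorname{dist}(p,\operatorname{center}(G^{\star}))$. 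I would then record the only thing needed about RFP: since it localizes a query to the center of the cell minimizing $d(M_{RSS},\mu_{xy})$, its output is always $\operatorname{center}(\tilde G)$ for some $\tilde G\in\G$, where $\tilde G$ may differ from $G^{\star}$ precisely because the averaging into $\mu_{xy}$ destroys the within-cell RSS variability (the information-loss effect noted in the introduction). Thus $\E_{RFP}=\operatorname{dist}(p,\operatorname{center}(\tilde G))$.

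The heart of the argument is then a single geometric observation: for a regular square (or rectangular) grid, each cell coincides with the Voronoi region of its own center, because the perpendicular bisector of two adjacent centers is exactly their shared edge. Hence, among all cell centers, the nearest one to a point $p\in G^{\star}$ is $\operatorname{center}(G^{\star})$ itself, giving $\operatorname{dist}(p,\operatorname{center}(G^{\star}))\le\operatorname{dist}(p,\operatorname{center}(G))$ for every $G\in\G$. Applying this with $G=\tilde G$ yields the pointwise bound $\E_{LL}\le\E_{RFP}$, and taking expectations over the query distribution (and over any measurement noise) gives $\Expect[\E_{LL}]\le\Expect[\E_{RFP}]$.

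I expect the main obstacle to be making the first step airtight rather than the geometry: one must argue precisely that completeness and unambiguousness force LL's candidate set to terminate as the single true cell, i.e. that $G^{\star}$ is never eliminated and nothing else survives alongside it. A secondary point worth stating explicitly is that the Voronoi identity, and therefore the pointwise domination, depends on the grid being regular so that cells are exactly the Voronoi cells of their centers; should one instead adopt a squared-error criterion, the same conclusion follows from the centroid being the minimizer of expected squared distance over a cell, which I would note as an alternative route in case the grid geometry is relaxed.
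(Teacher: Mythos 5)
Your proof is correct and rests on the same two pillars as the paper's own argument---under completeness and unambiguousness LL always returns the center of the true cell, while RFP may return the center of some other cell---but you assemble these pillars differently, and your assembly is cleaner. The paper works at the level of expectations: it partitions the measurement space $\B$ into Voronoi regions $\Region_{xy}$ around the means $\mu_{xy}$, splits RFP's expected error into a correct-grid term (probability $p_{xy}(\E_1)$, error $\epsilon_1$) and a wrong-grid term (probability $p_{xy}(\E_2)$, error $\epsilon_2$), asserts $\Expect_{xy}[\epsilon_1] < \Expect_{xy}[\epsilon_2]$, and compares the resulting mixture against LL's pure-$\E_1$ expression. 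You instead prove pointwise domination---for every individual query, LL's error is at most RFP's---and then integrate, which is formally stronger and buys you two things the paper has to assume away: since your bound holds no matter which cell RFP selects, you never need the paper's concession that RFP ``always chooses the correct grid'' on region boundaries, nor its assumption that $\mu_{ab} \ne \mu_{cd}$ for distinct grids; and you make explicit the geometric fact the paper only asserts (that $\epsilon_2$ always exceeds $\epsilon_1$), namely that for a regular grid each cell is exactly the Voronoi cell of its own center, so the nearest center to a point is its own cell's center. One caveat applies equally to both arguments: you identify the greedy elimination of Algorithm \ref{alg:lateration-lookup-table} with the exact intersection $\bigcap_{i}\tau_{a_i}(r_{a_i})$, whereas the implemented loop can terminate early via its variance threshold; strictly speaking both you and the paper prove the claim for the idealized intersection rule, so your treatment is no less rigorous than the original on this point.
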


\begin{proof}
	
First, let us consider the fingerprinting method. Let $\B\subset R^{|A|}$ be the measurement space,  $\G$ be the grid system, $P_{xy}$ be the density distribution of measurements belonging to grid $xy\in \G$, and $\mu_{xy}$ be the mean vector of $P_{xy}$. The distance function $d$ in RFP localization implicitly specifies a Voronoi partition of $\B$:  $\{\Region_{xy}\}_{xy\in \G}$, where 
\begin{equation}
\Region_{xy} = \{r\in \B \mid d(r,\mu_{xy}) \le d(r,\mu_{uv})\  \forall uv \ne xy \in \G \}.
\label{eq:region}
\end{equation}
Second, let $M_{RSS}\in \B$ denote a query measurement observed at position $M_{xy}$, which is in grid $xy\in\G$.

If $M_{RSS} \in \Region_{xy}$ and $M_{RSS} \notin \Region_{uv}$ for any other $uv\ne xy\in\G$, then RFP localizes $M_{RSS}$ in the center of grid $xy$. The localization error is $\epsilon_1=d(M_{xy},c_{xy})$, that is, the distance between $M_{xy}$ and the center of the grid $c_{xy}$. Let $\Expect_{xy}[\epsilon_1]$ denote the expected error of this type in grid $xy$. Let $\E_1$ denote this type of error.

If $M_{RSS} \notin \Region_{xy}$, then $\exists uv\in \G$ in which RFP localizes $M_{RSS}$. The error is $\epsilon_2=d(M_{xy},c_{uv})$. Let $\Expect_{xy}[\epsilon_2]$ denote the expected error of this type with respect to grid $xy$. Let $\E_2$ denote this type of error. Note that error $\epsilon_2$ is always greater than error $\epsilon_1$, which implies that
\begin{equation}
\Expect_{xy}[\epsilon_1] < \Expect_{xy}[\epsilon_2]. 
\label{eq:expected_error}
\end{equation}

If $M_{RSS} \in \Region_{xy}$ and $M_{RSS} \in \Region_{uv}$ for $uv\ne xy\in\G$, then that means either $M_{RSS}$ is on the border of two regions $\Region_{xy}$ and $\Region_{uv}$ or $\mu_{xy}=\mu_{uv}$. In the first case, RFP has to choose between the two grids. Let us give some advantage to RFP and assume for the sake of simplicity, it always chooses the correct grid somehow. In the second case, we have $\Region_{xy} = \Region_{uv}$. In our opinion this case happens rarely in practice, so we omit this type of error, and we assume that $\mu_{ab} \ne \mu_{cd}$ for any $ab,cd\in \G$ in the rest of this proof.

The probability that RFP localizes $M_{RSS}$ in its correct grid is

\begin{equation}
p_{xy}(\E_1)= \int_{\Region_{xy}} P_{xy}(r_{a_1}, r_{a_2}, \dots, r_{a_{|A|}})\, dr_{a_1}\dots dr_{a_{|A|}}, 
\end{equation}
and it is indicated by the white area under $P_{xy}$ in Figure \ref{fig:fingerprinting_error}. The probability that RFP localizes $M_{RSS}$ in a grid incorrectly is
\begin{equation}
\begin{aligned}
p_{xy}(\E_2)=& \int_{\B\setminus \Region_{xy}} P_{xy}(r_{a_1}, \dots, r_{a_{|A|}})\, dr_{a_1}\dots dr_{a_{|A|}}\\
 =& 1-p_{xy}(\E_1). 
\end{aligned}
\end{equation}
and it is indicated by the gray area  under $P_{xy}$ in Figure \ref{fig:fingerprinting_error}.

Last, the total expected error for the fingerprinting method can be summarized as follows:
\begin{equation}
\Expect [\E_{RFP}] = \sum_{xy\in\G} p(xy) \left[p_{xy}(\E_1)\Expect_{xy}[\epsilon_1] + p_{xy}(\E_2)\Expect_{xy}[\epsilon_2]\right],
\end{equation}
where $p(xy)$ denotes a priori probability of receiving a measurement from grid $xy$.

Now we examine LL, and let us consider $M_{RSS}=[r_{a_1}, r_{a_2}, ...., r_{a_{|A|}}]$ measured in grid $xy$. Because of the first and second assumptions, we have  $xy\in\tau_{a_i}(r_{a_i})$ and $\bigcap_{i}\tau_{a_i}(r_{a_i}) = \{xy\}$, respectively. This means when Algorithm 2 terminates in line \ref{alg:term}, the set $C^k$ unambiguously contains only one candidate grid, where the measurement was observed. 
Hence, the total expected error for lookup lateration can be summarized as follows:
\begin{equation}
\Expect [\E_{LL}] = \sum_{xy\in\G} p(xy) \Expect[\E_1],
\end{equation}
Following from Eq. \ref{eq:expected_error} we obtain
\begin{equation}
\Expect [\E_{LL}] \le \Expect[ \E_{RFP}],
\end{equation} 
which proves our claim.

\end{proof}

To illustrate the proof of Theorem \ref{theorem1} , let us consider a scenario in which there is one antenna tower $A$ and two nearby grids $G_{xy}, G_{uv}$. Furthermore, let us assume that $A$ receives $-53, -55, -57, -59$, and $-61$ dBm from $G_{xy}$ and $-58, -60, -62, -64$, and $-66$ dBm from $G_{uv}$. The mean values are $\mu_{xy}=-57$ and $\mu_{uv}=-62$ and $\Region_{xy}=\{x \mid x\ge -59.5\}$ and $\Region_{uv}=\{x\mid x \le -59.5\}$. RFP locates measurement $M=-61$ dBm in grid $G_{uv}$ incorrectly because $M\in\Region_{uv}$, that is, $M$ closer to $\mu_{uv}$ than to $\mu_{xy}$. However, LL constructs a table $\tau_A$ in which $\tau_A (-66)=\tau_A (-64)=\tau_A (-62)=\tau_A (-60)=\tau_A (-58)=\{G_{uv}\}$ and $\tau_A (-61)=\tau_A (-59)=\tau_A (-57)=\tau_A (-55)=\tau_A (-53)=\{G_{xy}\}$. Therefore, LL will find $G_{xy} \in \tau_A (-61)$  and localize $M$ in it correctly.

\begin{figure}[hbtp]
	\centering
	\small
	\includegraphics[trim=4cm 8cm 8cm 1.5cm,clip=true, width=70mm]{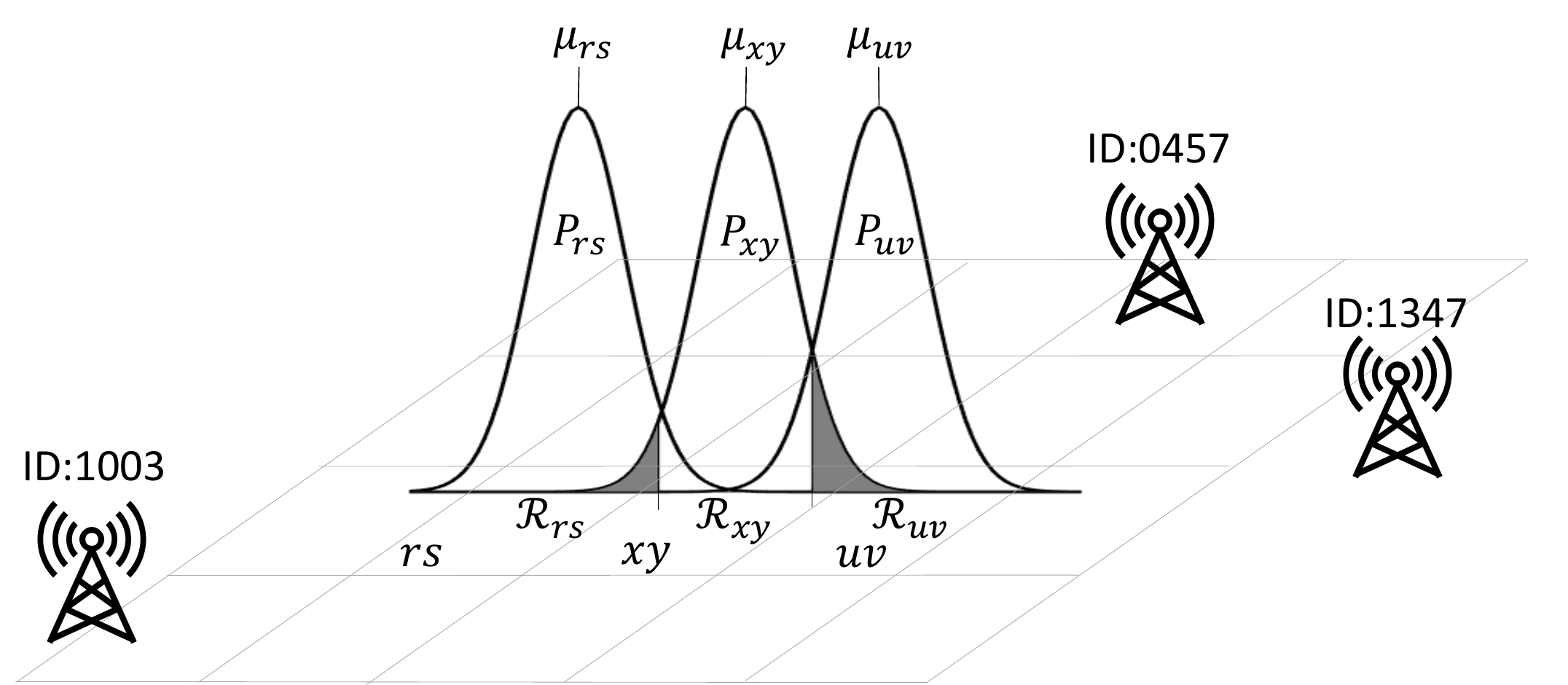}
	\caption{Distribution of measurements for three grids $rs, xy,$ and $uv$ in fingerprinting scenario. Decision boundaries between $\Region_{rs},\Region_{xy}$, and $\Region_{uv}$ are halfway between distribution means.  Consider a query in grid $xy$. Measurements belonging to region $\Region_{xy}$ are correctly localized in grid $xy$, and the white area in the distribution $P_{xy}$ denotes $p_{xy}(\E_1)$. However, measurements belonging to other regions are misplaced in other grids, and the gray area denotes $p_{xy}(\E_2)$. Note that distributions are multivariate and can be different from Gaussian.}
	\label{fig:fingerprinting_error}
\end{figure}

One may argue that Theorem \ref{theorem1} is based on two conditions that are hard to ensure in practice. We may agree, but in our opinion, Theorem \ref{theorem1} shows the conceptual limitation of the fingerprinting method. However, if we easy up on the conditions and if the unambiguousness is not required, then we claim 
\begin{theorem}
	Fingerprinting method is suboptimal
	\label{theorem2}
\end{theorem}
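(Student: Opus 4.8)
The plan is to show that RFP fails to attain the minimum achievable expected localization error, so that a method exploiting more than the grid means---such as LL---can strictly beat it; exhibiting a single such scenario suffices to certify suboptimality. First I would make ``optimal'' precise: a localizer is a measurable assignment rule $\phi\colon\B\to\G$ with expected error $\Expect[\E_\phi]=\sum_{xy}p(xy)\int_\B d(M_{xy},c_{\phi(r)})\,P_{xy}(r)\,dr$ in the notation of the proof of Theorem~\ref{theorem1}. The error-minimizing rule sends each $r$ to the grid with smallest expected center distance; when the grids are well separated this is the Bayes rule $r\mapsto\arg\max_{xy}p(xy)P_{xy}(r)$, and because every correct assignment costs $\epsilon_1$ while every incorrect one costs $\epsilon_2>\epsilon_1$ (Eq.~\ref{eq:expected_error}), this rule lower-bounds the error of every competitor, RFP included.

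The decisive observation is that RFP does not implement this rule: its partition $\{\Region_{xy}\}$ in Eq.~\ref{eq:region} depends only on the means $\{\mu_{xy}\}$ and discards the shapes, spreads, and overlaps of the $P_{xy}$. The Voronoi-on-means partition coincides with the Bayes partition only in the knife-edge case of identical isotropic distributions with equal priors; in general the boundary $\{r:p(xy)P_{xy}(r)=p(uv)P_{uv}(r)\}$ is not the perpendicular bisector of $\mu_{xy}$ and $\mu_{uv}$. I would therefore next exhibit a configuration where the two partitions disagree on a set of positive measure. The one-antenna, two-grid example already displayed after the proof of Theorem~\ref{theorem1} serves directly: the value $-61$ originates from $G_{xy}$ yet lies in $\Region_{uv}$, so RFP misplaces it and pays $\epsilon_2$, whereas LL---which records in $\tau_A(-61)=\{G_{xy}\}$ exactly which grid emitted that RSS---recovers the correct grid and pays only $\epsilon_1$. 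Since this misassignment occurs with probability $p(xy)\,P_{xy}(\{-61\})>0$ and $\epsilon_2>\epsilon_1$, the two expected errors separate strictly, $\Expect[\E_{LL}]<\Expect[\E_{RFP}]$, so RFP does not achieve the optimum.

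The main obstacle is that, once the unambiguousness hypothesis of Theorem~\ref{theorem1} is dropped, LL may return several candidate grids in $C^k$ and report their average rather than a single grid; I must therefore verify that the averaged estimate still incurs error no larger than $\epsilon_1$ on the witnessing event, or else restrict attention to RSS values whose lookup tables remain singletons---as they do in the displayed example, where the two grids emit disjoint RSS sets and hence every observed RSS maps to exactly one grid, leaving LL unambiguous and always correct. A secondary point is strictness and non-degeneracy: I would exclude the measure-zero boundary sets and the coincidental case $\mu_{xy}=\mu_{uv}$ (already set aside in the proof of Theorem~\ref{theorem1}), and observe that the positive-probability disagreement persists under small perturbations of the $P_{xy}$, so the suboptimality is robust rather than an artifact of one contrived instance.
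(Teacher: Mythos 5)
Your proof is essentially sound, but it proves suboptimality with respect to a different criterion than the paper does, and by a genuinely different route. The paper's proof (as flagged immediately after the theorem statement) works ``in the sense of the Neyman-Pearson criterion'': it casts grid localization as a one-vs-rest hypothesis test with likelihoods $P^+=P_{xy}$ and $P^-\propto\sum_{uv\ne xy}P_{uv}$, invokes the Neyman-Pearson Lemma to identify the maximal-sensitivity decision region $\Region^+=\{M_{RSS}\mid P^+(M_{RSS})\ge P^-(M_{RSS})\}$, and then argues that this region can be disconnected (e.g., for multimodal $P_{uv}$ arising from partial shadowing), whereas the RFP region $\Region_{xy}$ of Eq.~\ref{eq:region} is always a connected Voronoi cell determined by the means alone; hence RFP's sensitivity is dominated by the NPL rule, with equality only for symmetric unimodal distributions. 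You instead formalize optimality as minimal expected localization error over all measurable assignment rules $\phi\colon\B\to\G$, and certify suboptimality by a concrete witness: the one-antenna, two-grid example, where RFP misassigns $-61$ (in fact also $-58$) and pays $\epsilon_2$ on a positive-probability event while LL pays only $\epsilon_1$ everywhere, giving the strict inequality via Eq.~\ref{eq:expected_error}. Both arguments hinge on the same defect---RFP discards everything about $P_{xy}$ except $\mu_{xy}$---but they buy different things: your Bayes-risk framing is more elementary, fully self-contained, uses the error metric already set up in Theorem~\ref{theorem1}, and yields a strict quantitative separation robust to perturbation; the paper's NPL framing connects the defect to a classical statistical power criterion, explains structurally \emph{why} RFP must fail (connected Voronoi cells cannot represent disconnected likelihood-ratio regions), and characterizes the equality case. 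One caveat on your write-up: your identification of the error-minimizing rule with the posterior-maximizing Bayes rule is only heuristic, since the misclassification cost $d(M_{xy},c_{uv})$ depends on which wrong grid is chosen; this does not damage your conclusion, because exhibiting any rule strictly better than RFP (here, LL on the witness example) already suffices, but you should state the argument that way rather than via a characterization of the optimum.
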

\noindent in the sense of the Neyman-Pearson criterion.

\begin{proof}
	Localization using grid systems can be considered a classification problem in which grids are considered classes, and a localization method has to decide whether to classify a query $M_{RSS}$ in a given class $xy$. 
	Let $P^+(M_{RSS}) = P_{xy}(M_{RSS})$ and $P^-(M_{RSS})=\frac{1}{Z}\sum_{uv\ne xy}P_{uv}(M_{RSS})$ be likelihood functions, where $Z$ is an appropriate normalization factor.  According to the Neyman-Pearson Lemma (NPL), the highest sensitivity can be achieved with the following likelihood-ratio rule:
	\begin{equation}
	\text{Localize } M_{RSS} \text{ in grid } xy \text{ if } \quad \frac{P^+(M_{RSS})}{P^-(M_{RSS})} \ge \eta,
	\end{equation} 
	where $\eta$ is a trade-off parameter among false positive, false negative error, and statistical power. Let $\eta=1$ for sake of simplicity. The region $\Region^+=\{M_{RSS}\mid P^+(M_{RSS})\ge P^-(M_{RSS}) \}$ in which a query $M_{RSS}\in \Region^+$ is localized in grid $xy$ and defined by the likelihood ratio can be noncontinuous region. However, Figure \ref{fig:fp_vs_np} illustrates that the region $\Region_{xy}$ defined by RFP in eq. \ref{eq:region} is always continuous and different from $\Region^+$. Therefore, RFP yields less or equal sensitivity than it could be achieved with NPL, where equality holds iff all distribution belonging to grids are symmetric and unimodal. 
\end{proof}
\begin{figure}[btp]
	\centering
	\small
	\includegraphics[trim=0cm 4cm 0cm 7cm,clip=true, scale=0.3]{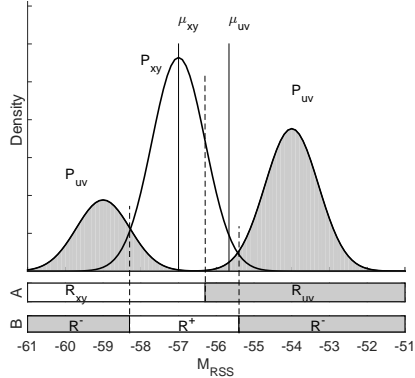}
	\caption{Decision boundaries in case of multi-modal density distributions. $P_{uv}$ is a multi-modal distribution that might be obtained from a partially shadowed area. Measurements obtained from the open field might concentrate on one mode while measurements obtained from behind a building or from an underpass might concentrate around the other mode of the distribution.
		Then $\mu_{xy},\mu_{uv}$ denote the mean vector of the two distributions, respectively. The decision boundary defined with RFP is located between $\mu_{xy},\mu_{uv}$ and the corresponding regions $\Region_{xy},\Region_{xy}$ denoted on A. For instance, query measurement $M_{RSS} = -59$ will be localized in grid $xy$ with RFP; however, it is more likely to be in grid $uv$. Decision boundaries defined by the NPL are shown on B.}
	\label{fig:fp_vs_np}
\end{figure}

In other words, the weakness of RFP arises from the fact that RFP does not take into account the full distribution of measurements. It uses only the mean vectors of the distributions, while NPL utilizes the whole distribution. 

On the other hand, LL will identify all candidate grids in which the query can be found. Therefore, the set of candidate grids $C^{(k)}$ will contain the correct grid as well. If LL was programmed to localize a query by the center of the grid $\tilde{xy}$ for which $\tilde{xy}=\text{argmax}_{xy\in C^{(k)}}\{P_{xy}(M_{RSS})\}$, then LL would be optimal in the sense of NPL. However, we decided to report the mean of the candidate grids. In this case, $\E_2$ happens, but we hope averaging the candidate locations would mitigate the amount of error $\epsilon_2$. This also does not require us to store or model full measurement distributions.

\section{Results and Discussion}

We have carried out our experiments on the Urban Hannover Scenario dataset \cite{Rose2013}. This dataset contains approximately 22 Gb of RSS measurements simulated in Downtown Hannover, along with a reference $x$-$y$ location. The reference point (0-0) for the coordinate system is the lower left corner of the scenario. The data is the result of a prediction with a calibrated ray tracer using 2.5D building information. For each mobile phone, the 20 strongest RSS measurements are provided. For further details, we refer to \cite{Rose2013}.  Data was split into training and test sets randomly, and experiments were repeated ten times. Results were averaged. The variance in results was very small because of the dataset's huge size; therefore, standard deviation is not shown for the sake of simplicity. Algorithms were implemented in Python programming language and executed on a PC equipped with a 3.4 GHz CPU and a 16 GB RAM.

\begin{figure}[hbtp]
	\centering
	\small
	\begin{tabular}{ccc}
		\includegraphics[trim=1cm 0cm 1cm 0cm,clip=true, width=43mm]{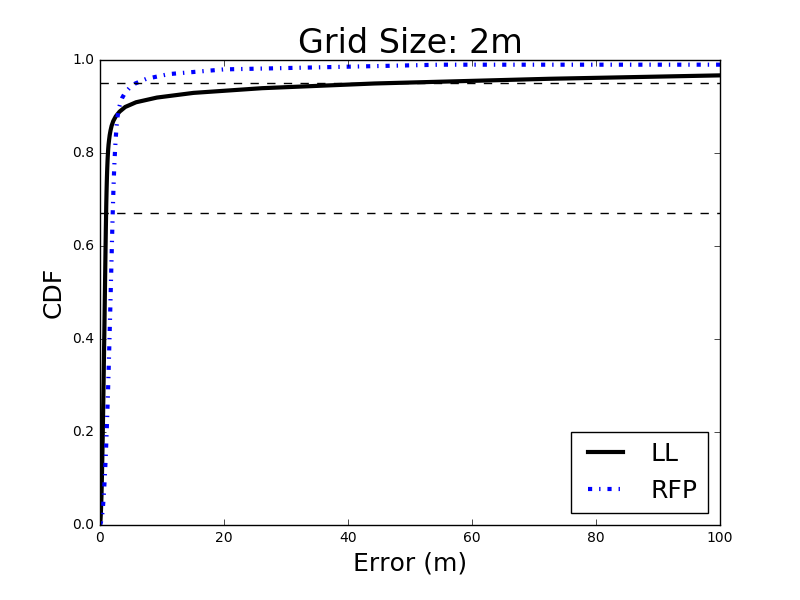}&
		\includegraphics[trim=1cm 0cm 1cm 0cm,clip=true, width=43mm]{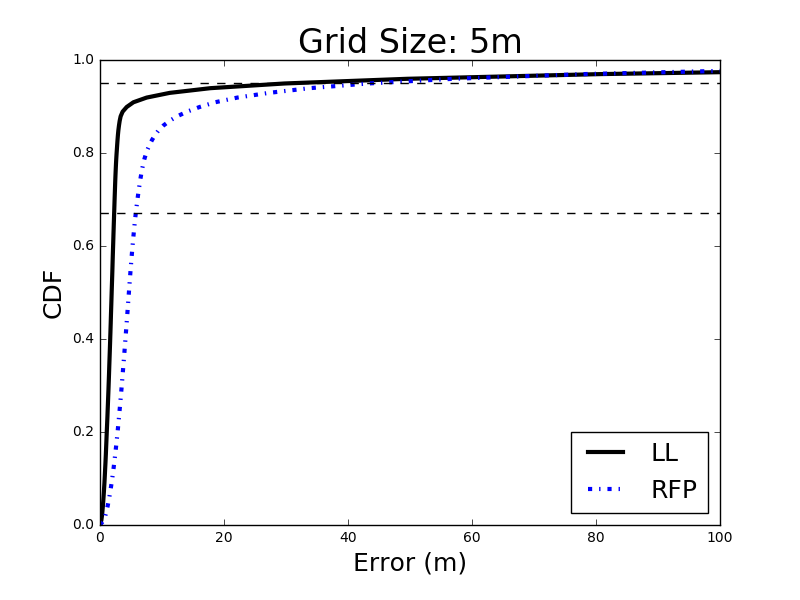}&
		\includegraphics[trim=1cm 0cm 1cm 0cm,clip=true, width=43mm]{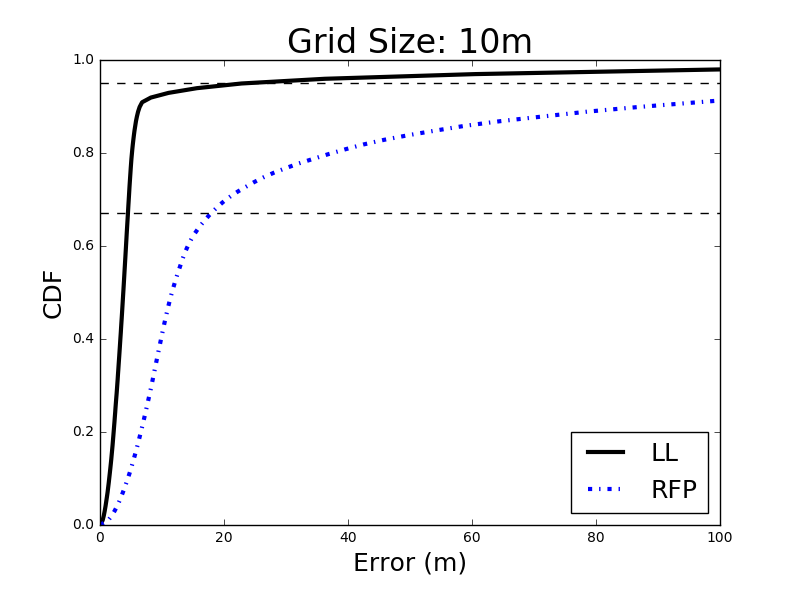} \\
		(A) & (B) & (C)\\
		\includegraphics[trim=1cm 0cm 1cm 0cm,clip=true, width=43mm]{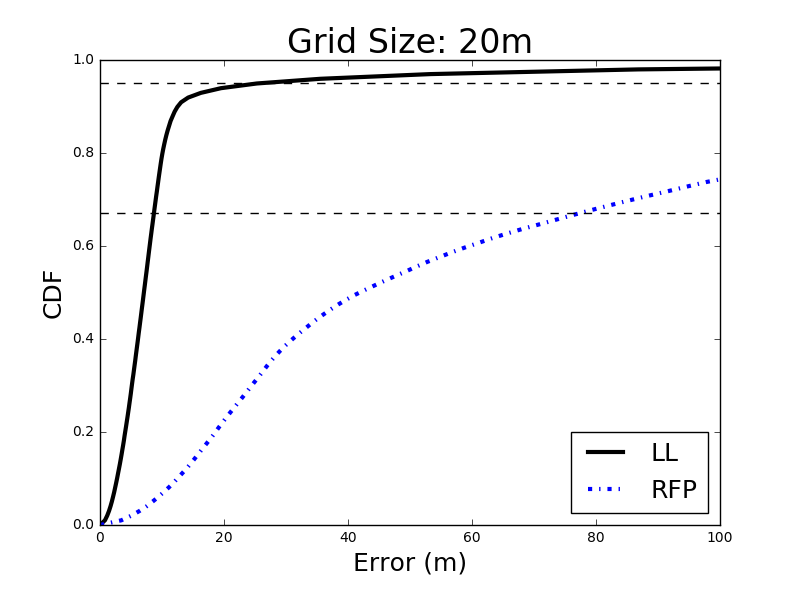} &
		\includegraphics[trim=1cm 0cm 1cm 0cm,clip=true, width=43mm]{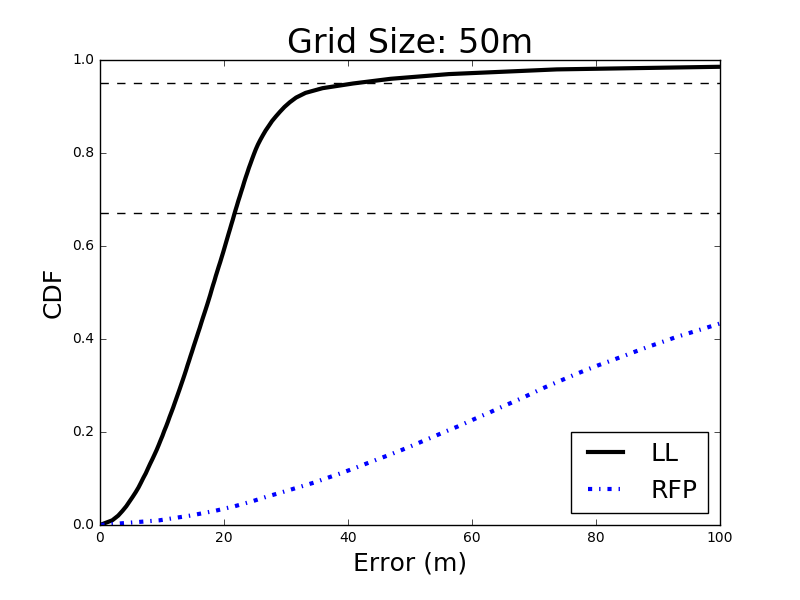}&
		\includegraphics[trim=1cm 6cm 2cm 6cm,clip=true, width=43mm, height=36mm]{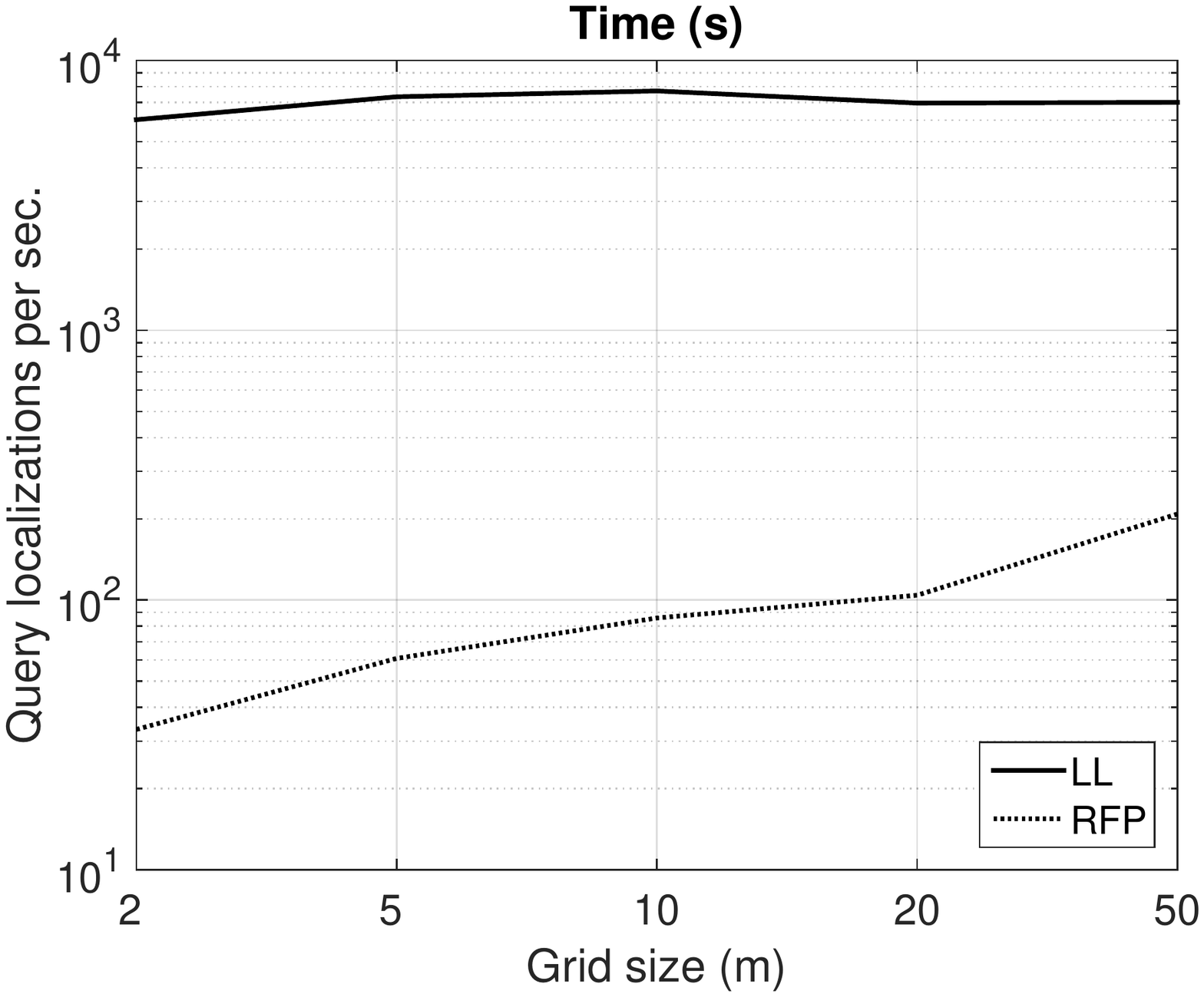}\\
		(D) &(E)&	(F)\\
	\end{tabular}	
	\caption{Performance of RFP and LL using various grid sizes. The training set size was 10 \%. }
	\label{fig:main_results}
\end{figure}

First, we investigated the localization accuracy of RFP and LL methods using grid sizes 5, 10, 20, and 50 meters. The full dataset was split to 10\% training and 90\% test data. The results shown in Figure \ref{fig:main_results} clearly tell us LL and RFP perform nearly the same when the grid size is small (Figures \ref{fig:main_results}A$-$B). The performance of LL remains roughly the same as the size of the grid grows (Figures \ref{fig:main_results}C$-$D), while the performance of RFP decreases quickly. In our opinion the dramatic drop in the performance of RFP is in accordance with Theorems \ref{theorem1} and \ref{theorem2}. Large grids cover a large area containing various obstacles on the ray propagation path, which result in wide, multimodal measurement distributions w.r.t. grids. This is not taken into account by RFP. Next, we analyzed the speed of these methods, and the results are summarized in Figure \ref{fig:main_results}F. LL seems to be extremely fast, around two magnitudes faster compared to the RFP method. This improvement stems from the fact that LL processes only a few lookup tables related to a query. On the other hand, RFP iteratively processes all grids while seeking  the most similar RSS vector pattern. It is also surprising, the execution time of LL does not depend on the grid size, while RFP quadratically becomes slower as the grid size decreases.

One of the main drawbacks of any RFP-based method is that they require a well-designed reference database because a query cannot be localized in its correct grid if it was not covered during site surveying. This is the case for LL as well because it cannot localize measurements in grids that are not stored in lookup tables. Thus, we investigated how these methods perform when the localization phase is not preceded by a proper site surveying. We calculated a site coverage defined as the ratio of grids that contain training data. For instance, 35\% of coverage means that the training data belong to 35\% of the total grids, and all queries would be localized in one of these grids. Therefore, 65\% of the grids contain queries that would be localized in wrong grids. We note that the larger the coverage, the more complete the reference dataset and the smaller error we expect. The coverage is driven by two factors: (i) grid size and (ii) the size of the training data. First, we took 5\% of the training data and calculated the coverage and the localization error at 67\% and 95\% using various grid sizes. The results shown in Table \ref{tab::RES_COV} tell us a larger grid size results in larger coverage, and both methods yield a larger localization error at 67\%. However, if we take a closer look and calculate a relative error (RE) as the ratio of the error and the grid size, we can observe opposite tendencies. The RE obtained with LL decreases as the grid size grows, and we think this is the result of increased coverage. However, the RE obtained with RFP increases in spite of increased coverage, and we explain this by the arguments in Theorems \ref{theorem1} and \ref{theorem2}. On the other hand, the RE and the overall error show opposite tendencies at 95\%. LL decreases while RFP increases the overall error and RE as the sizes of grids and coverage grow. 



\begin{table}
	\caption{Localization Error (m) and Coverage with various grid sizes}
	\begin{center}
		\footnotesize
		\begin{tabular}{cccccccccccc}
			\hline
			\multicolumn{2}{r}{Grid size (m)} &\multicolumn{2}{c}{1}&\multicolumn{2}{c}{5}&\multicolumn{2}{c}{10}&\multicolumn{2}{c}{20}&\multicolumn{2}{c}{50}\\
			\hline
			\multicolumn{2}{r}{Cov. (\%)} &\multicolumn{2}{c}{26.38}&\multicolumn{2}{c}{54.42}&\multicolumn{2}{c}{66.74}&\multicolumn{2}{c}{71.78}&\multicolumn{2}{c}{87.20}\\
			\hline
			& & Error &RE$^1$& Error &RE& Error &RE& Error &RE& Error &RE\\
			\hline	
			\multirow{2}{*}{67\%}&LL&0.77&\em 0.77&2.44&\em 0.49&4.74&\em 0.47&9.11&\em 0.46&22.24&\em 0.44 \\
			&RFP&1.09&\em 1.09&5.76&\em 1.15&16.36&\em 1.64&72.50&\em 3.63&157.17&\em 3.14\\
			\hline	
			\multirow{2}{*}{95\%}&LL&125.98&\em 125.98&89.51&\em 17.9&69.37&\em 6.94&55.66&\em 2.78&49.88&\em 0.998\\
			&RFP&6.32&\em 6.32&45.05&\em 9.01&138.62&\em 13.68&242.19&\em 12.11&343.94&\em 6.88\\
			\hline	
			\label{tab::RES_COV}
		\end{tabular} 
	\end{center}
	\vspace{-6mm}
	{\footnotesize The training set size was 5\%. $^1$Relative error (RE) is defined as the ratio of error and grid size.} 
\end{table}

Next, we fixed the grid size to 5 m, and we varied the training set size from 20\% to 1\%. The results are shown in Figure \ref{fig:trainsize_results}. It can be seen that the performance of RFP seems to be unaffected by training set sizes; however, the performance obtained with LL decreases as coverage shrinks.


\begin{figure}[hbtp]
	\centering
	\small
	\begin{tabular}{cc}
		\includegraphics[trim=1cm 0cm 1cm 0cm,clip=true, width=60mm]{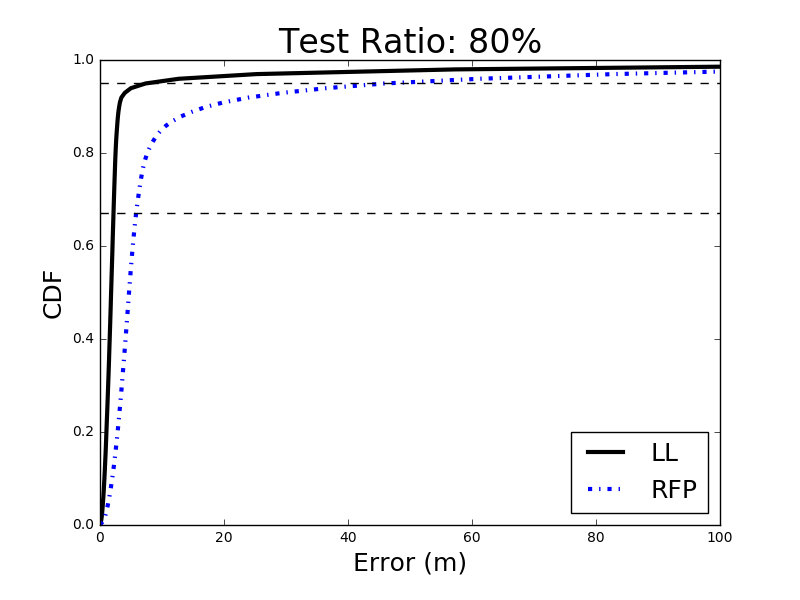}&
		\includegraphics[trim=1cm 0cm 1cm 0cm,clip=true, width=60mm]{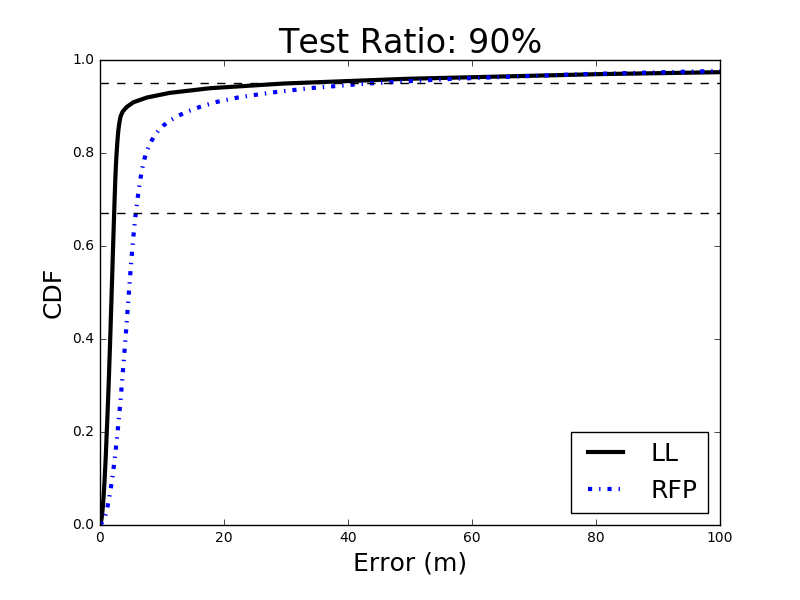}\\
		Coverage: 80.43\%&Coverage: 68.15\%\\
		\includegraphics[trim=1cm 0cm 1cm 0cm,clip=true, width=60mm]{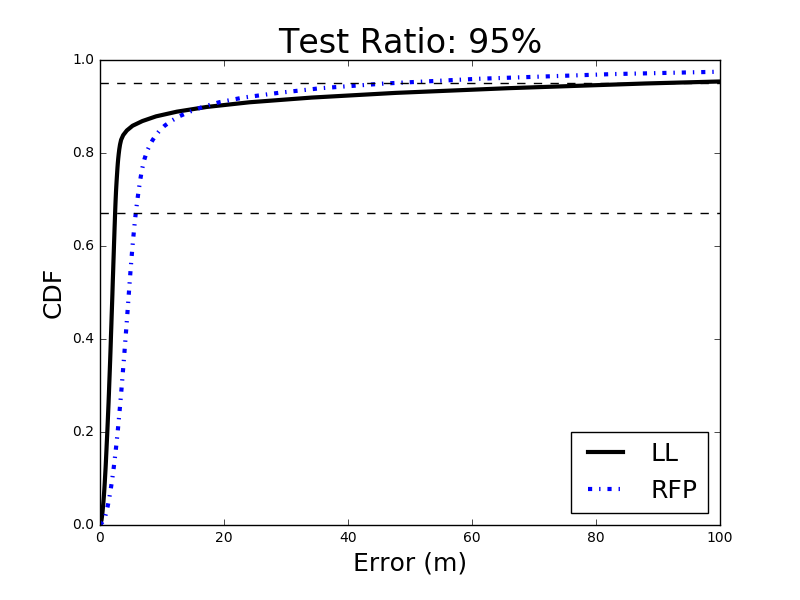}&
		\includegraphics[trim=1cm 0cm 1cm 0cm,clip=true, width=60mm]{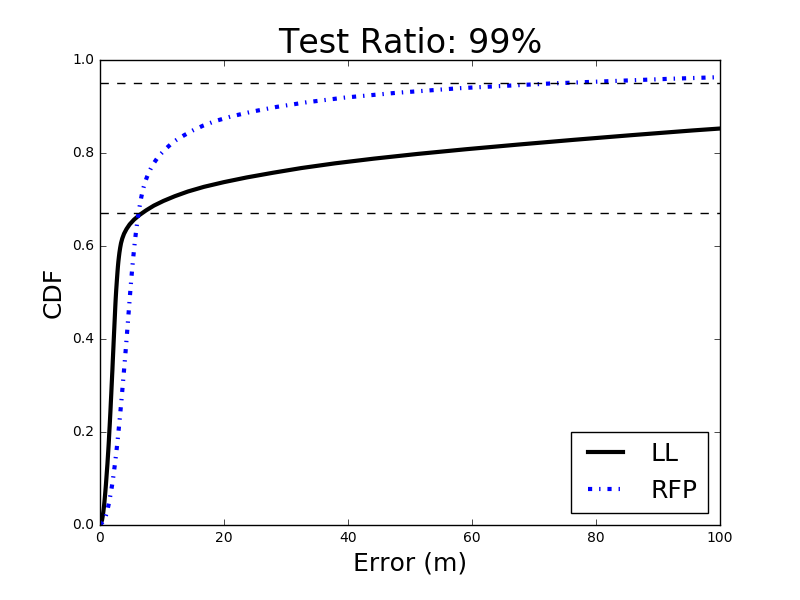}\\
		Coverage: 54.42\%&Coverage: 26.32\%\\
	\end{tabular}	
	\caption{Performance of RFP and LL using various training data sizes. Grid size was 5 m.}
	\label{fig:trainsize_results}
\end{figure}

Now we study the bin size $s$ used in LL and its effect on performance. We run LL using various bin sizes $s=1,2,3,5,10$ and training set sizes $1\%,5\%,10\%$, and $20\%$. The grids' size was fixed to 1 m, and the results are shown in Table \ref{tab::RES_BINNING}. We observed that in extreme cases when the coverage is very small, doubling bin size to $s=2$ can halve the localization error at 95\%. For instance, when coverage is only 26.57\%, the error at 95\% reduces from 125.95 m to 45.22 m. This could be further reduced to 21.70 m by using larger bin size $s=5$. Surprisingly, the bin size does not seem to have a strong impact on error at 67\%. This means that increasing bin size can compensate for improper site surveying in localization accuracy.

\section{Conclusions}
In this article, we have presented a new method called lookup lateration for localization problems in densely populated urban areas using received signal strength (RSS) data. Our method combines the advantages from both triangular lateration and fingerprinting. 

Akin to fingerprinting, lookup lateration utilizes a database of localized RSS as reference. This ensures high accuracy in urban areas because it implicitly encodes NLOS information in the reference dataset. However, the main difference from fingerprinting is that fingerprinting searches one big dataset for the most similar reference point in a nearest neighbor fashion. Contrary to that, lookup lateration stores reference locations for different RSS values and antenna towers in lookup tables separately. Then each lookup table can be considered a nonlinear RSS-to-distance mapping viewed from a given antenna, and mobile device's localization can be carried out by simply identifying common reference points in corresponding lookup tables. This process resembles triangular lateration apart from the fact that a non-linear RSS-to-distance mapping is utilized without involving any optimization and local minima problems. Furthermore, the main benefit of the decentralization of reference datasets is that the localization process does not need to search the whole dataset, just the few lookup tables w.r.t. observed RSS from corresponding antennas. This results in a speedup by two magnitudes independently from the size of grids and also ensures that lookup lateration is scalable and can be performed in distributed systems easily, akin to triangular lateration.

Lookup lateration outperforms triangular lateration and fingerprinting methods in localization accuracy as well. Triangular lateration uses monotone transformation to calculate distance from RSS values, and it does not take into account NLOS objects and buildings.  Hence, its performance in urban areas is always limited. In Theorems \ref{theorem1} and \ref{theorem2} we showed a conceptual limitation of the fingerprinting method when it is used with grid systems to reduce redundancy and search time. Briefly, the problem is that RSS observations are aggregated in the same grid position, and their variances are not taken into account. As a consequence, fingerprinting is prone to annotate observations with incorrect grid positions. This situation cannot happen with lookup lateration because it stores all grid positions for any observed RSS values separately. This fact makes LL very robust to larger grid sizes.

At last, we need to mention a potential drawback of our method, which is sensitivity to data sparsity. If we are given extremely limited data from site surveying and lookup tables do not contain enough candidate locations from every tower w.r.t. RRSs, then LL can be outperformed by RFP. However, we have shown that the data sparsity can be compensated for by increasing the RSS bin size. This can improve localization accuracy efficiently. 

Finally, it is very easy to migrate from RFP to LL. Since LL does not need any additional data compared to RFP, lookup tables can be constructed from the site surveying data of RFP. Lookup tables can also be an intelligent reorganization of fingerprinting data, which preserves measurement variance implicitly.

Our method is very simple, easy to implement in distributed systems, and inexpensive to maintain, which, we hope, can make it appealing for large-scale industrial applications.

\bibliographystyle{plain}
\bibliography{bibliography}

\end{document}